\newtheorem{theorem}{Theorem}
\newtheorem{proposition}{Proposition}
\title{Spectral Efficiency Bounds for Interference-Limited SVD-MIMO Cellular Communication Systems}
\author{Jinseok Choi, Jeonghun Park, and Brian L. Evans \thanks{The authors are with the Wireless Networking and Communication Group
(WNCG), Department of Electrical and Computer Engineering, The University
of Texas at Austin, Austin, TX 78701, USA. (e-mail: \{jinseokchoi89, jeonghun\}@utexas.edu, bevans@ece.utexas.edu)
} \\
}
\begin{document}
\maketitle
\begin{abstract}
The ergodic spectral efficiency (SE) in interference-limited multiple-input multiple-output (MIMO) downlink cellular systems is characterized based on stochastic geometry.
A single user is served by using singular value decomposition precoding and combining.
By approximating the expectations of the channel eigenvalues, we derive upper and lower bounds on the ergodic SE.
The obtained upper bound is the best possible system-level performance of any MIMO strategy in non-cooperative cellular networks.
We validate our analytical results through simulation.
We also conjecture that there exists the optimal number of streams being proportional to the pathloss exponent.
\end{abstract}
\begin{IEEEkeywords}
MIMO, downlink, SVD, ergodic spectral efficiency, the quarter circle law, Poisson point process.
\end{IEEEkeywords}

\section{Introduction}\label{sec:intro}
%
\IEEEPARstart{T}{he} performance of MIMO systems has been extensively investigated over a few decades~\cite{tse2005fundamentals, paulraj2004overview, goldsmith2003capacity}.
The prior work analyzed the MIMO systems in terms of the deterministic signal-to-noise ratio (SNR) and a SNR-based simulation study was universal~\cite{paulraj2004overview}. 
Under this approach, the results only hold for particular user locations, having difficulty to provide a system-level view incorporating many possible user locations.

 
To resolve the limitation, the system-level analysis has been performed in a tractable network model where the locations of the base stations (BSs) are modeled by using a homogeneous Poisson point process (PPP).
{ Leveraging this Poisson network model, the signal-to-interference-plus-noise ratio was characterized in a single-input single-output downlink cellular system
\cite{andrews2011tractable,di2013average}. }
{ Extending \cite{andrews2011tractable} to the MIMO broadcast channel \cite{veetil2013coverage}, the coverage probability and rate were derived by using a zero-forcing (ZF) receiver with inter cell interference (ICI) cancellation. 
Further in \cite{veetil2015performance}, the optimal number of antennas for ICI cancellation when using partial ZF receivers was derived along with analytical expressions for coverage and rate distribution.}
In \cite{dhillon2013downlink}, the performance of the downlink multi-antenna heterogeneous cellular network was studied for ZF precoding.
Cooperation gain under ZF beamforming \cite{hosseini2016stochastic} were investigated for MIMO networks, which provides the optimal loading factor ($\simeq 0.6$) to maximize per-BS ergodic sum rate.
{ Formulating ICI as an infinite sum of independent and conditionally distributed Gaussian random variables, average symbol error probability formulas for MIMO cellular networks were derived \cite{lu2015stochastic, di2015stochastic}. }


A common limitation in prior work \cite{veetil2013coverage,veetil2015performance, dhillon2013downlink, lu2015stochastic, hosseini2016stochastic} is the use of ZF precoder and/or equalizer due to analytical tractability. 
In a MIMO system, however, the ZF is suboptimal as it cannot extract the power gain from channels. 
In a single-user MIMO system, the optimal strategy is singular value decomposition (SVD) based precoding and combining. 
Thus, we analyze the system-level performance of an SVD-MIMO system. 

In this paper, we characterize the ergodic spectral efficiency (SE) of MIMO downlink cellular systems in which a single user with multiple antennas is served by using SVD precoding and combining.
The major difficulty of the analysis has been the characterization of eigenvalues and the ICI, coupled with each other in MIMO channels. 
To resolve such challenge, we use two key techniques: (i) the \textit{quarter circle law}~\cite{marvcenko1967distribution} to characterize eigenvalues, and (ii) 2-D homogeneous PPP~\cite{chiu2013stochastic} to model the MIMO networks. 
Leveraging these techniques, we approximate the expectation of each eigenvalue and derive the bounds of the ergodic SE of SVD-MIMO channels by decoupling the eigenvalue power gain from the ICI. 
{ Since we use the optimal precoder and combiner, our result serves as an upper bound on the system-level performance of any MIMO strategy in a non-cooperative cellular network under the assumptions of equal power allocation and no interference cancellation. }
This is not the case in the prior work~\cite{veetil2013coverage, veetil2015performance, dhillon2013downlink, lu2015stochastic, hosseini2016stochastic,di2015stochastic}.
We also observe that there exists the optimal number of streams that maximizes the ergodic SE and it is proportional to the pathloss exponent.

{\it Notation}: $\bf{A}$ is a matrix and $\bf{a}$ is a column vector. $\mathbf{A}^{H}$, $\mathbf{A}^\intercal$ and $\mathbf{A}^{-1}$ denote conjugate transpose, transpose and inverse, respectively. 
$\mathbf{A}^{(n)}$ represents a first ($n\times n$) sub-matrix, and $\mathbf{a}^{(n)}$ is a first ($n\times 1$) sub-vector $\--$ ($1\times n$) for a row vector. $\mathcal{CN}(\mu, \sigma^2)$ is a complex Gaussian distribution with mean $\mu$ and variance $\sigma^2$. 

\section{System Model}\label{sec:sys_model}

We consider a single-user downlink cellular network model in which BSs are distributed according to a homogeneous PPP,  $\Phi = \{{\bf{d}}_i ,i\in \mathbb{N}\}$ of intensity $\lambda$. Users are also distributed as an independent homogeneous PPP, $\Phi_{\rm U} = \{{\bf{u}}_i ,i\in \mathbb{N}\}$.
Each BS's coverage area is presented as a Voronoi cell yielding minimum pathloss. Each BS and user are equipped with $N$ antennas.

We assume that each BS transmits $m\leq N$ streams to its associated user with equal power allocation.
Denoting $\mathbf{H}_i \in \mathbb{C}^{N\times N}$ as the channel matrix between the user at $\mathbf{u}_i$ and the associated BS at $\mathbf{d}_i$, its SVD is represented as $\mathbf{H}_{i} = \mathbf{U}_{i} \mathbf{\Lambda}_{i}\mathbf{V}_{i}^H$. 
The matrices $\mathbf{U}_i$ and $\mathbf{V}_i$ are $N \times N$ unitary matrices and $\mathbf{\Lambda}_i \in \mathbb{R}^{N \times N}$ is a diagonal matrix of singular values $\sigma_k$;  $\mathbf{\Lambda}_i = \text{diag}(\sigma_1,\cdots,\sigma_N)$ where $\sigma_1 \geq \cdots \geq \sigma_N$.
Assuming the perfect channel state information at transmitters and at receivers, the BS at $\mathbf{d}_i$ transmits symbols $\mathbf{s}_{i} = [s_{i,1}, s_{i,2}, \cdots, s_{i,m}, 0, \cdots,0]^{\intercal} \in \mathbb{C}^{N\times 1}$, with $\mathbb{E}[|s_{i,k}|^2] = 1/m$ to its associated user at $\mathbf{u}_i$ through a precoding matrix $\mathbf{V}_{i} \in \mathbb{C}^{N\times N}$.
Then, the received signals go through $\mathbf{U}_{i}^{H} \in \mathbb{C}^{N\times N}$.

Per Slivnyak's theorem~\cite{baccelli2009stochastic}, we consider a typical mobile user at the origin $\mathbf{u}_1 = \mathbf{0}$.
Noting that $\mathbf{s}_1$ has $m$ non-zero entries, the received signal $\mathbf{y}^{(m)} \in \mathbb{C}^{m\times 1}$ is given by
%
\begin{equation}\label{eq:1}
\mathbf{y}^{(m)} = \|\mathbf{d}_1\|^{-\frac{\alpha}{2}}\mathbf{\Lambda}^{(m)}_{1}\,\mathbf{s}_{1}^{(m)}+\sum_{i = 2}^{\infty}{{\|\mathbf{d}_{i}\|}^{-\frac{\alpha}{2}}(\mathbf{\tilde{H}}^H_{i})^{(m)}\mathbf{s}_{i}^{(m)}} + \tilde{\mathbf{n}}^{(m)}
\end{equation}
\normalsize
%
%
with $\mathbf{\tilde{H}}_{i}^{H} = \mathbf{U}_{1}^{H}\mathbf{H}_{i,1}\mathbf{V}_{i}$, where $\mathbf{H}_{i,1}$ is the channel matrix between the BS at $\mathbf{d}_i$ and the typical user. $\mathbf{\Lambda}_1 = \text{diag}(\sigma_{1}, \cdots, \sigma_{N}) $ and $\tilde{\mathbf{n}}= \mathbf{U}_{1}^{H}\mathbf{n}$ where $\mathbf{n}\sim \mathcal{CN}(\mathbf{0},\nu\mathbf{I}_{\rm N})$ is the additive white Gaussian noise. The pathloss exponent is considered as $\alpha > 2$. 
We assume Rayleigh fading, i.e., all the channel coefficients follow the IID $\mathcal{CN}(0, 1)$. 
Since $\mathbf{U}_{1}^{H}$ and $\mathbf{V}_{i}$ are unitary,
the matrix $\mathbf{\tilde{H}}_{i}$ is also an IID Rayleigh fading channel matrix and $\mathbf{\tilde{n}}\sim \mathcal{CN}(\mathbf{0},\nu\mathbf{I}_{\rm N})$.

From~\eqref{eq:1}, the $k$th received signal becomes
\vspace{-0.3 em}
\begin{equation}\label{eq:2}
y_k = \|\mathbf{d}_1\|^{-\frac{\alpha}{2}}{\sigma}_{k}\,{s}_{1,k}+\sum_{i=2}^{\infty}{\|\mathbf{d}_{i}\|^{-\frac{\alpha}{2}}(\mathbf{\tilde{h}}_{i,k}^{H})^{(m)}\,\mathbf{s}^{(m)}_{i}} + \tilde{n}_k
\end{equation}
where $\mathbf{\tilde{h}}_{i,k}^{H}$ is the $k$th row vector of $\mathbf{\tilde{H}}_{i}^{H}$. 
We assume that the noise is negligible~\cite{andrews2011tractable}.
The signal-to-interference ratio (SIR) of $y_k$ is expressed as
\vspace{-0.3 em}
\begin{equation}\label{eq:3}
\text{SIR}_k = \frac{\|\mathbf{d}_1\|^{-\alpha}{\sigma}^{2}_{k}}{\sum_{i=2}^{\infty}{\|\mathbf{d}_i\|^{-\alpha}{q}_{i,k}}}
\end{equation}
where $q_{i,k} = \left|\big(\mathbf{\tilde{h}}_{i,k}^{H}\big)^{(m)}\mathbf{\tilde{h}}_{i,k}^{(m)}\right|$. Since $q_{i,k}$ is the sum of $m$ exponential random variables, $q_{i,k}$ follows Chi-squared distribution with $2m$ degree-of-freedom $\chi^{2}_{2m}$.

\section{Main Results}\label{sec:main}

In this section, we derive the upper and lower bounds of the ergodic SE for~\eqref{eq:1} without the noise. 
The ergodic SE of the typical user is expressed as
\vspace{-0.3 em}
\begin{equation}\label{eq:4}
r(N, m, \alpha, \lambda) = \mathbb{E}\bigg[ \sum_{k=1}^{m}{\log_2(1+\text{SIR}_k)}\bigg].
\end{equation}

\subsection{Eigenvalue Characterization}\label{subsec:A}
To characterize eigenvalues, we exploit the asymptotic distribution of eigenvalues instead of the non-asymptotic distribution due to its exceptionally high complexity~\cite{wigner1965distribution}.
From the~\textit{quarter circle law}, the PDF and CDF of the eigenvalue $X$ of $\mathbf{H}/\sqrt{N}$~\cite{tse2005fundamentals}, where $\bf  H$ is the $N\times N$ channel matrix whose entries are distributed as $\mathcal{CN}(0, 1)$, are given as

\small
\begin{equation}\label{eq:5}
f_{X}(x) = \frac{1}{\pi}\sqrt{\frac{1}{x}-\frac{1}{4}},    \quad \quad \text{for } 0< x \leq 4
\end{equation}
\normalsize
%
\small
\begin{equation}\label{eq:6}
F_X(x) = \frac{1}{2\pi}\left\{\pi + x\sqrt{\frac{4}{x}-1}-2\tan^{-1}\left(\frac{x-2}{x-4}\sqrt{\frac{4}{x}-1}\right)\right\},
\end{equation}
\normalsize
and the PDF and CDF of $Y=\ln X$ are derived as
%
\small
\begin{equation}\label{eq:7}
g_Y(y) = \frac{1}{\pi}e^y\sqrt{\frac{1}{e^y} - \frac{1}{4}},      \quad \quad  \text{for } -\infty < y \leq \ln 4
\end{equation}
\normalsize
%
\small
\begin{equation}\label{eq:8}
G_Y(y) = \frac{1}{2\pi}\Bigg\{\pi+ e^y\sqrt{4e^{-y}-1} +2\tan^{-1}\Bigg(\frac{e^{-y}(e^y -2)}{\sqrt{4e^{-y}-1}}\Bigg)\Bigg\}.
\end{equation}
\normalsize

We derive Proposition 1 using~\eqref{eq:5} and~\eqref{eq:6} to approximate the expectation of each eigenvalue $\sigma^2_i$. 
Proposition 1 is used to derive the upper and lower bounds of the ergodic SE.
\begin{proposition}
The expectation of the $i$-th eigenvalue $\sigma^2_i$ of an \small$N\times N$\normalsize~matrix, whose entries are IID zero-mean complex random variables with unit variance, is approximated by
\vspace{-0.5em}

\footnotesize
\begin{equation}\label{eq:9}
\begin{aligned}
\mathbb{E}\Big[\sigma^2_i\Big] \simeq  &\frac{N^2}{4\pi}\left[-a_i\,(a_i-2)\sqrt{\frac{4}{a_i}-1}+4\tan^{-1}\left(\frac{(a_i-2)\sqrt{\frac{4}{a_i}-1}}{a_i-4}\right)\right.\\
&\left.+a_{i-1}\,(a_{i-1}-2)\sqrt{\frac{4}{a_{i-1}}-1} +4\tan^{-1}\left(\frac{a_{i-1}-2}{a_{i-1}\sqrt{\frac{4}{a_{i-1}}-1}}\right)\right]
\end{aligned}
\end{equation}
\normalsize 
where $a_i = F^{-1}_X(1-i/N)$, and $\sigma_1 \geq \sigma_2 \geq \cdots \geq \sigma_N$.
\end{proposition}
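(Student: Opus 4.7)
The plan is to convert the exact expectation into an integral against the limiting eigenvalue density and then evaluate that integral in closed form, mirroring the structure of \eqref{eq:9}. First, treat the eigenvalues $\sigma_1^2,\ldots,\sigma_N^2$ of $\mathbf{H}\mathbf{H}^H$ as if the empirical distribution of $\sigma_k^2/N$ already equals the quarter circle density $f_X$ of \eqref{eq:5}. This replaces the finite-$N$ order-statistic calculation with a deterministic quantile calculation and is the source of the $\simeq$ in \eqref{eq:9}.

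Next, partition the support $(0,4]$ into $N$ intervals $[a_i,a_{i-1}]$ each carrying mass $1/N$, with $a_i=F_X^{-1}(1-i/N)$, $a_0=4$, and $a_N=0$. Since the $i$-th largest of $N$ i.i.d.\ draws from $f_X$ should lie in the $i$-th upper quantile interval, approximate $\sigma_i^2/N$ by the conditional mean of $X$ on $[a_i,a_{i-1}]$:
\begin{equation*}
\mathbb{E}\!\left[\sigma_i^2\right] \;\simeq\; N\cdot\frac{\int_{a_i}^{a_{i-1}} x f_X(x)\,dx}{1/N} \;=\; N^2\!\int_{a_i}^{a_{i-1}} x f_X(x)\,dx,
\end{equation*}
reducing the claim to a definite-integral evaluation.

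For the integral, observe that $xf_X(x)=\tfrac{1}{2\pi}\sqrt{x(4-x)}$ and complete the square, $x(4-x)=4-(x-2)^2$. The substitution $x-2=2\sin\theta$, $dx=2\cos\theta\,d\theta$ reduces the integrand to $4\cos^2\theta$, yielding the antiderivative
\begin{equation*}
\int \sqrt{x(4-x)}\,dx \;=\; 2\arcsin\!\left(\tfrac{x-2}{2}\right) + \tfrac{1}{2}(x-2)\sqrt{x(4-x)} + C.
\end{equation*}
Then rewrite $\arcsin u$ as $\arctan\bigl(u/\sqrt{1-u^2}\bigr)$ with $u=(x-2)/2$, noting $\sqrt{1-u^2}=\tfrac{1}{2}\sqrt{x(4-x)}$ and that $\sqrt{x(4-x)}$ can be expressed both as $x\sqrt{4/x-1}$ and, via $(x-2)/\sqrt{x(4-x)}=-(x-2)\sqrt{4/x-1}/(x-4)$, in the alternative form matching the first $\arctan$ argument in \eqref{eq:9}. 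Evaluating at the endpoints $a_{i-1}$ and $a_i$, collecting terms, and restoring the prefactor $N^2/(2\pi)$ reproduces \eqref{eq:9}.

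The routine work is the trigonometric substitution and the arithmetic at the endpoints. The main obstacle lies in this last manipulation: the two $\arctan$ arguments in \eqref{eq:9} are written in asymmetric algebraic forms, so one must carefully combine $\arctan(-u)=-\arctan(u)$ with the sign identity above to obtain $(a_i-4)$ in the denominator of the first $\arctan$ while keeping $a_{i-1}\sqrt{4/a_{i-1}-1}$ in the denominator of the second, rather than the symmetric $\sqrt{a_i(4-a_i)}$ form one obtains most naturally.
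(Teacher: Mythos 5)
Your proposal follows essentially the same route as the paper's proof: approximate $\mathbb{E}[\sigma_i^2/N]$ by the conditional mean of the quarter-circle-distributed $X$ on the quantile interval $\mathcal{A}_i=(a_i,a_{i-1}]$ of mass $1/N$, which gives $\mathbb{E}[\sigma_i^2]\simeq N^2\int_{a_i}^{a_{i-1}}x f_X(x)\,\mathrm{d}x$, and then evaluate that integral. The only difference is that you carry out the trigonometric-substitution evaluation and the $\arcsin$-to-$\arctan$ endpoint manipulations explicitly, which the paper leaves implicit; your closed form and sign bookkeeping correctly reproduce \eqref{eq:9}.
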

\begin{proof}
See Appendix A.
\end{proof}
Using Proposition 1, we can approximate the expectation of $\sigma_i^2$ only with the parameters $i$ and $N$. The numerical validity of Proposition 1 is demonstrated in Section~\ref{sec:Num}.


\subsection{Upper and Lower Bounds of Ergodic Spectral Efficiency}
Lemma 1 in~\cite{hamdi2010useful} is used to convert the ergodic SE to an integral form. 
%
%
Leveraging Proposition 1 and Lemma 1, we derive Theorem 1 which is the main result in this paper.
\begin{theorem} 
The ergodic spectral efficiency of a typical mobile user in the MIMO cellular model~\eqref{eq:1} is bounded by
\begin{equation}\label{eq:17}
\begin{split}
r(N, m, \alpha)  \leq {\log_2{e}} \int_0^\infty \frac{1}{z}\frac{m-\sum_{k=1}^m e^{-zU_{N,k}}}{{}_2 F_1\left(m,-\frac{2}{\alpha},1-\frac{2}{\alpha},-z\right)}\,\mathrm{d}z
\end{split}
\end{equation}
%
\begin{equation}\label{eq:18}
\begin{split}
r(N, m, \alpha)  \geq \,&{\log_2{e}} \int_0^\infty \frac{1}{z}\frac{m-\sum_{k=1}^m e^{-ze^{L_{N,k}}}}{{}_2 F_1\left(m,-\frac{2}{\alpha},1-\frac{2}{\alpha},-z\right)}\,\mathrm{d}z
\end{split}
\end{equation}
with 
%
\footnotesize
\begin{equation*}
\begin{aligned}
U_{N,k} =  &\frac{N^2}{4\pi}\left[-a_k(a_k-2)\sqrt{\frac{4}{a_k}-1}+4\tan^{-1}\left(\frac{(a_k-2)\sqrt{\frac{4}{a_k}-1}}{a_k-4}\right)\right.\\
&\left.+a_{k-1}\,(a_{k-1}-2)\sqrt{\frac{4}{a_{k-1}}-1} +4\tan^{-1}\left(\frac{a_{k-1}-2}{a_{k-1}\sqrt{\frac{4}{a_{k-1}}-1}}\right)\right]\\
\end{aligned}
\end{equation*}
\normalsize
where $a_k = F_X^{-1}(1- k/N)$, and
%
\footnotesize
\begin{equation*}
\begin{aligned}
L_{N,k} &= \frac{N}{\pi} \int_{b_k}^{b_{k-1}} y\,e^y\sqrt{\frac{1}{e^y}-\frac{1}{4}}\,\mathrm{d}y + \ln N
\end{aligned}
\end{equation*}
\normalsize
where $b_k = G_Y^{-1}(1- k/N)$. The function ${}_2 F_1(\cdot,\cdot,\cdot,\cdot)$ is the Gauss-hypergeometric function defined as
%
\small
\begin{equation*}
{}_2 F_1(a,b,c,z) =\frac{\Gamma(c)}{\Gamma(b)\Gamma(c-b)}\int_0^1\frac{t^{b-1}(1-t)^{c-b-1}}{(1-tz)^a}\mathrm{d}t.
\end{equation*}
\end{theorem}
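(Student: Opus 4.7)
The plan is to (i) apply Hamdi's Lemma~1 to rewrite $\mathbb{E}[\ln(1+\mathrm{SIR}_k)]$ as a Laplace-transform integral, (ii) evaluate the interference Laplace transform via the probability generating functional (PGFL) of the PPP and average over the serving-BS distance $r=\|\mathbf{d}_1\|$, and (iii) invoke Jensen's inequality in two complementary ways to decouple $\sigma_k^2$ from the interference, using Proposition~1 for the upper bound and an analogous quarter-circle argument applied to $\ln\sigma_k^2$ for the lower bound.

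Writing $\mathrm{SIR}_k=S_k/I_k$ with $S_k=r^{-\alpha}\sigma_k^2$ and $I_k=\sum_{i\geq 2}\|\mathbf{d}_i\|^{-\alpha}q_{i,k}$, note that $q_{i,k}\sim\Gamma(m,1)$ so that $\mathcal{L}_q(s)=(1+s)^{-m}$, and that conditional on $r$ the signal and interference are independent. Hamdi's lemma then gives $\mathbb{E}[\ln(1+\mathrm{SIR}_k)\mid r]=\int_0^\infty z^{-1}\mathcal{L}_{I_k\mid r}(z)(1-\mathcal{L}_{S_k\mid r}(z))\,dz$. Nearest-BS association excludes interferers from the open ball of radius $r$, so the PGFL yields $\mathcal{L}_{I_k\mid r}(z)=\exp(-\pi\lambda r^2 K(zr^{-\alpha}))$ with $K(\tilde z)=2\int_1^\infty(1-(1+\tilde z v^{-\alpha})^{-m})v\,dv$; the substitution $t=v^{-\alpha}$ and the Euler integral for ${}_2F_1$ produce the key identity $1+K(\tilde z)={}_2F_1(m,-2/\alpha,1-2/\alpha,-\tilde z)$. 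Changing variable $\tilde z=zr^{-\alpha}$ (so that $dz/z=d\tilde z/\tilde z$) and using that for a 2-D PPP $r^2$ is exponential with rate $\pi\lambda$, i.e.\ $\mathbb{E}[e^{-\pi\lambda r^2 s}]=1/(1+s)$, the $r$-expectation collapses $\exp(-\pi\lambda r^2 K(\tilde z))$ into $1/{}_2F_1(m,-2/\alpha,1-2/\alpha,-\tilde z)$, while $\mathcal{L}_{S_k\mid r}(z)$ becomes $\mathbb{E}[e^{-\tilde z\sigma_k^2}]$. Summing over $k=1,\dots,m$ and multiplying by $\log_2 e$ then produces the exact representation $r(N,m,\alpha)=\log_2 e\int_0^\infty \bigl(m-\sum_{k=1}^m\mathbb{E}[e^{-z\sigma_k^2}]\bigr)/\bigl(z\,{}_2F_1(m,-2/\alpha,1-2/\alpha,-z)\bigr)\,dz$, with the eigenvalue factor and the interference factor now cleanly decoupled.

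The last step is to bound $\mathbb{E}[e^{-z\sigma_k^2}]$ in the two directions. For \eqref{eq:17}, note that $x\mapsto e^{-zx}$ is convex for $z>0$, so Jensen gives $\mathbb{E}[e^{-z\sigma_k^2}]\geq e^{-z\mathbb{E}[\sigma_k^2]}\simeq e^{-zU_{N,k}}$ after invoking Proposition~1. For \eqref{eq:18}, observe that $y\mapsto\ln(1+e^y)$ is convex (its second derivative $e^y/(1+e^y)^2$ is positive): writing $\sigma_k^2=e^{Y_k}$ with $Y_k=\ln\sigma_k^2$ and applying Jensen over $Y_k$ conditional on $(r,I_k)$ yields $\mathbb{E}[\ln(1+\mathrm{SIR}_k)]\geq\mathbb{E}_{r,I_k}[\ln(1+e^{\mathbb{E}[Y_k]}/(r^\alpha I_k))]$, which is the ergodic rate of a deterministic signal of strength $e^{\mathbb{E}[Y_k]}$. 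Repeating the Hamdi/PGFL calculation for this deterministic signal simply replaces $\mathbb{E}[e^{-z\sigma_k^2}]$ by $e^{-ze^{\mathbb{E}[Y_k]}}$; approximating $\mathbb{E}[Y_k]=\ln N+\mathbb{E}[\ln X_k]$ by rerunning the Proposition~1 bin-partition argument on the density $g_Y$ of $Y=\ln X$ in \eqref{eq:7} (equal-mass bins $[b_k,b_{k-1}]$ with $b_k=G_Y^{-1}(1-k/N)$, then conditional mean on the $k$-th bin) produces exactly $L_{N,k}$. The main technical obstacle is verifying $1+K(\tilde z)={}_2F_1(m,-2/\alpha,1-2/\alpha,-\tilde z)$: the natural Euler integral has parameter $b=-2/\alpha<0$ outside the standard convergence range, so the identity is best obtained by an integration by parts that tames the $t^{-2/\alpha-1}$ singularity at $0^+$ (integrable precisely because $\alpha>2$) and then matching with the series definition of ${}_2F_1$. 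A secondary subtlety is the choice of convexity direction for the lower bound -- in $y=\ln\sigma^2$ rather than in $\sigma^2$ -- so that the deterministic proxy becomes $e^{\mathbb{E}[\ln\sigma_k^2]}$ and not $\mathbb{E}[\sigma_k^2]$.
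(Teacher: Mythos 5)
Your proposal is correct and follows essentially the same route as the paper: Hamdi's Lemma~1 to pass to the Laplace-transform integral, Jensen's inequality applied to $\sigma_k^2$ (for the upper bound) and to $\ln\sigma_k^2$ (for the lower bound), and the Proposition~1 equal-mass binning of the quarter-circle law to produce $U_{N,k}$ and $L_{N,k}$; the only cosmetic difference is that you apply Hamdi first and Jensen inside the Laplace transform for the upper bound, whereas the paper applies Jensen at the $\log$ level first, which yields the identical bound. You also work out the ICI factor $\mathcal{M}_I(z)=1/{}_2F_1(m,-2/\alpha,1-2/\alpha,-z)$ explicitly via the PGFL and the series/Euler-integral identity, a step the paper simply delegates to a citation, and your derivation of that identity is correct.
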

\normalsize

\begin{proof}
See Appendix B
\end{proof}
{ Under assumptions of a Poisson network model, equal power allocation and no ICI cancellation, the derived upper bound shows the best possible system-level performance of any MIMO non-cooperative cellular network since it is the characterization of the optimal MIMO transceiver technique with respect to ergodic SE.}

The proposed bounds provide insight into the optimal number of streams.
The ICI term, ${}_2 F_1\left(m,-\frac{2}{\alpha},1-\frac{2}{\alpha},-z\right)$, increases as the number of streams $m$ increases, so the individual SIR of each stream decreases. 
Also, the multiplexing gain increases as $m$ increases.
This implies a trade-off between multiplexing gain and SIR gain with respect to $m$. Thus, it is expected that there exists the optimal number of streams $m^*$ that maximizes the ergodic SE.
In the next section, this intuition is validated.


\begin{figure}[t]\centering
\includegraphics[scale = 0.35,trim=4 4 4 4,clip]{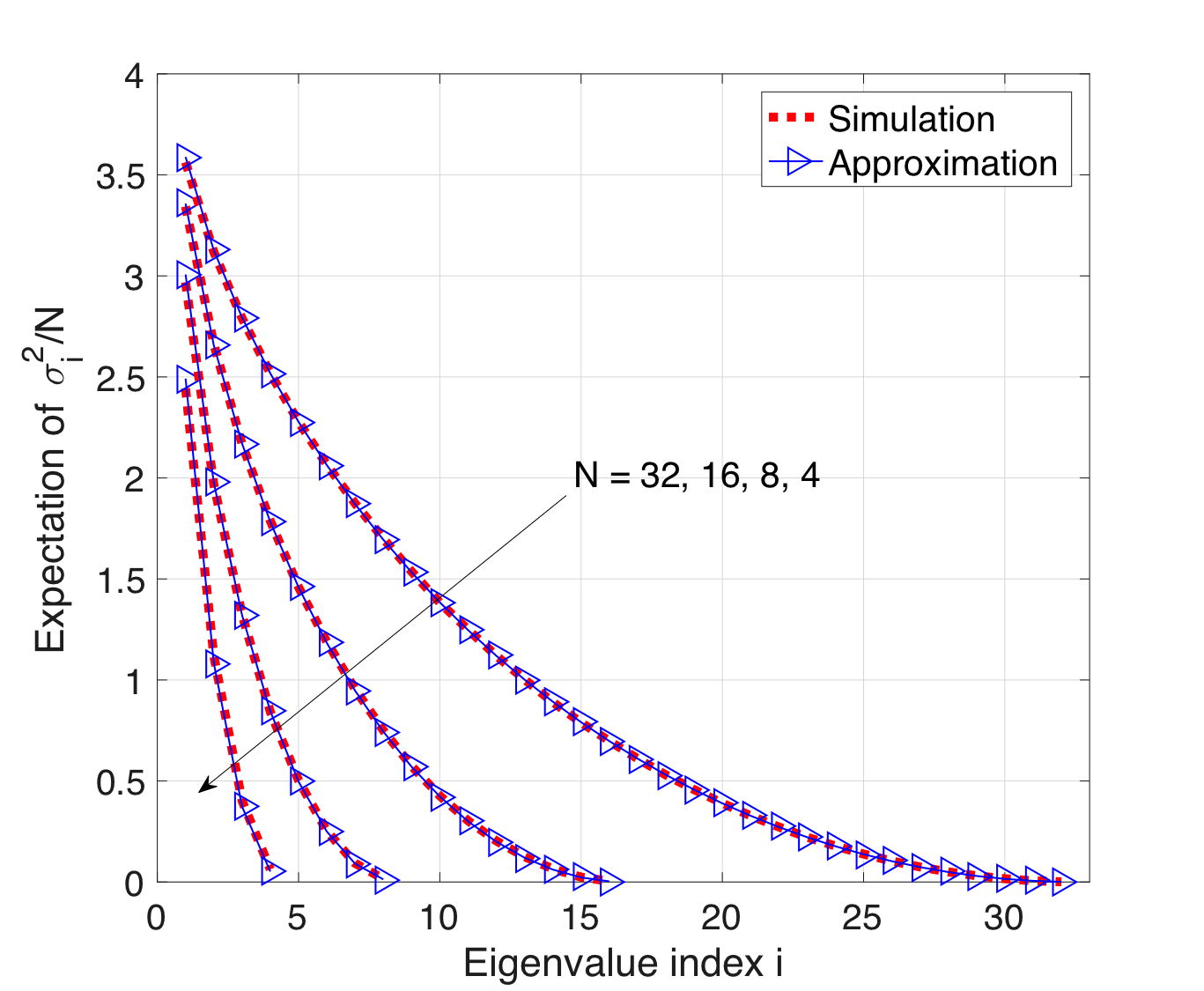}
\caption{The simulation and approximation results of the expectations of $\sigma^2_i/N$ for the number of antennas $N\in \{4, 8, 16, 32\}$.} \label{fig:1}
\end{figure}


\section{Numerical Results}\label{sec:Num}
In this section, we provide the simulation results. 
Fig.~\ref{fig:1} shows the expectation of $\sigma^2_i/N$ for an $N\times N$ channel matrix with $N\in \{4, 8, 16, 32\}$. 
It illustrates that the approximations closely match with the simulation results even for small $N$. 

The obtained bounds and our intuition regarding the optimal number of streams $m^*$ are also validated.
Fig.~\ref{fig:2}(a) shows analytical bounds and simulation results for a pathloss exponent $\alpha \in \{3, 4, 5\}$ and the number of antennas $N \in \{2, 4,\cdots, 16\}$ with $ m = N$. 
In the figure, gaps are reasonably tight;
the average gap between the bounds is about $0.24$ bps/Hz in Fig.~\ref{fig:2}(a). 
Hence, the obtained bounds can estimate the true ergodic SE within the error of $0.24$ bps/Hz.
It is noticeable that the ergodic SE scales almost  linearly with $N$ due to the multiplexing gain and {the gap between the bounds narrows with increasing $N\in \{2, 4,\cdots, 16\}$. }
The ergodic SE also increases as the pathloss exponent $\alpha$ increases since the ICI diminishes faster than the desired signal with the larger $\alpha$.

\begin{figure}[t]
\centering
$\begin{array}{c}
\mbox{(a)} \\  
{\resizebox{0.95\columnwidth}{!}
{\includegraphics[trim=4 4 4 4,clip]{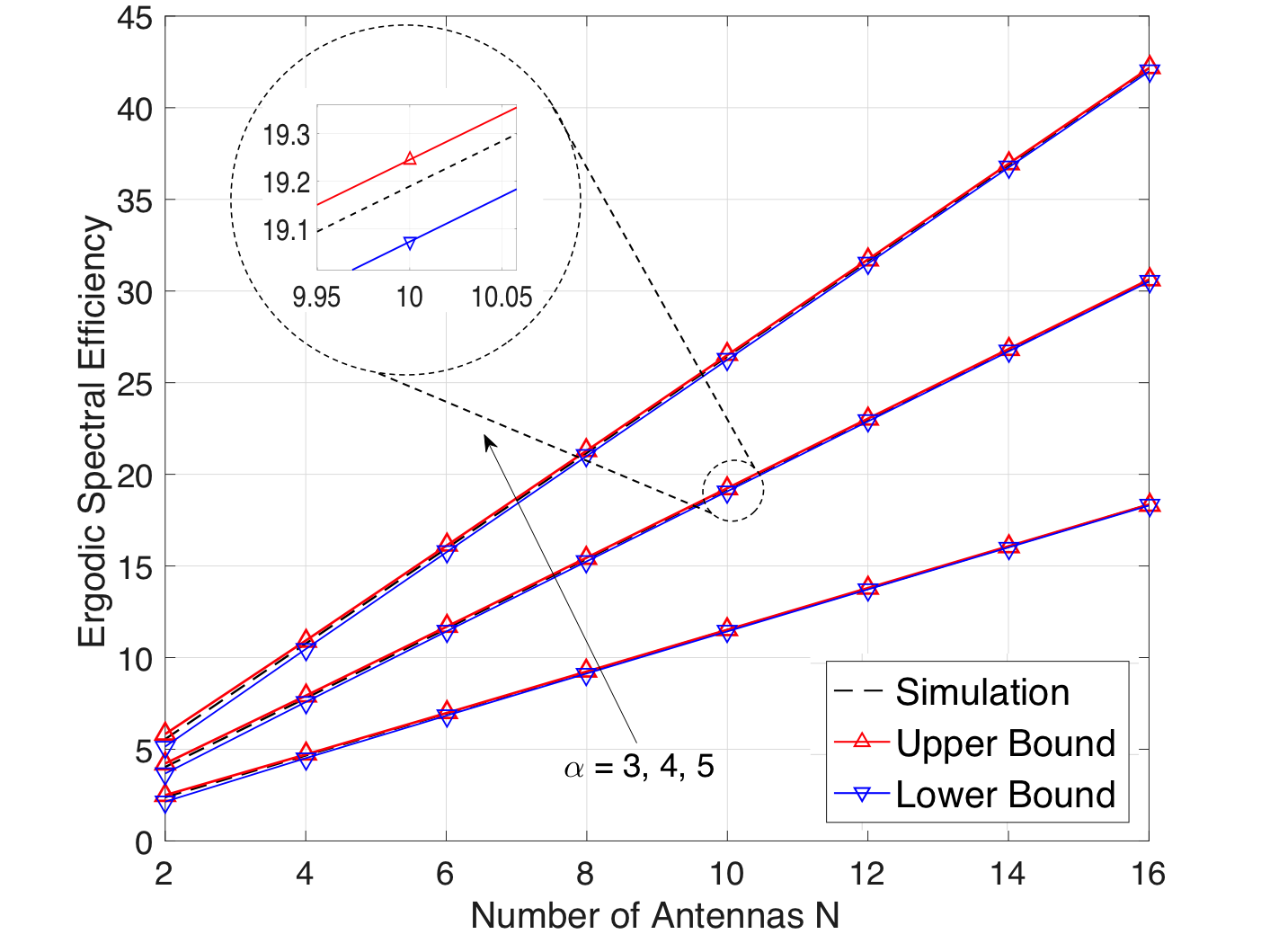}}}  \\
\mbox{(b)}\\
{\resizebox{0.95\columnwidth}{!}
{\includegraphics[trim=4 4 4 4,clip]{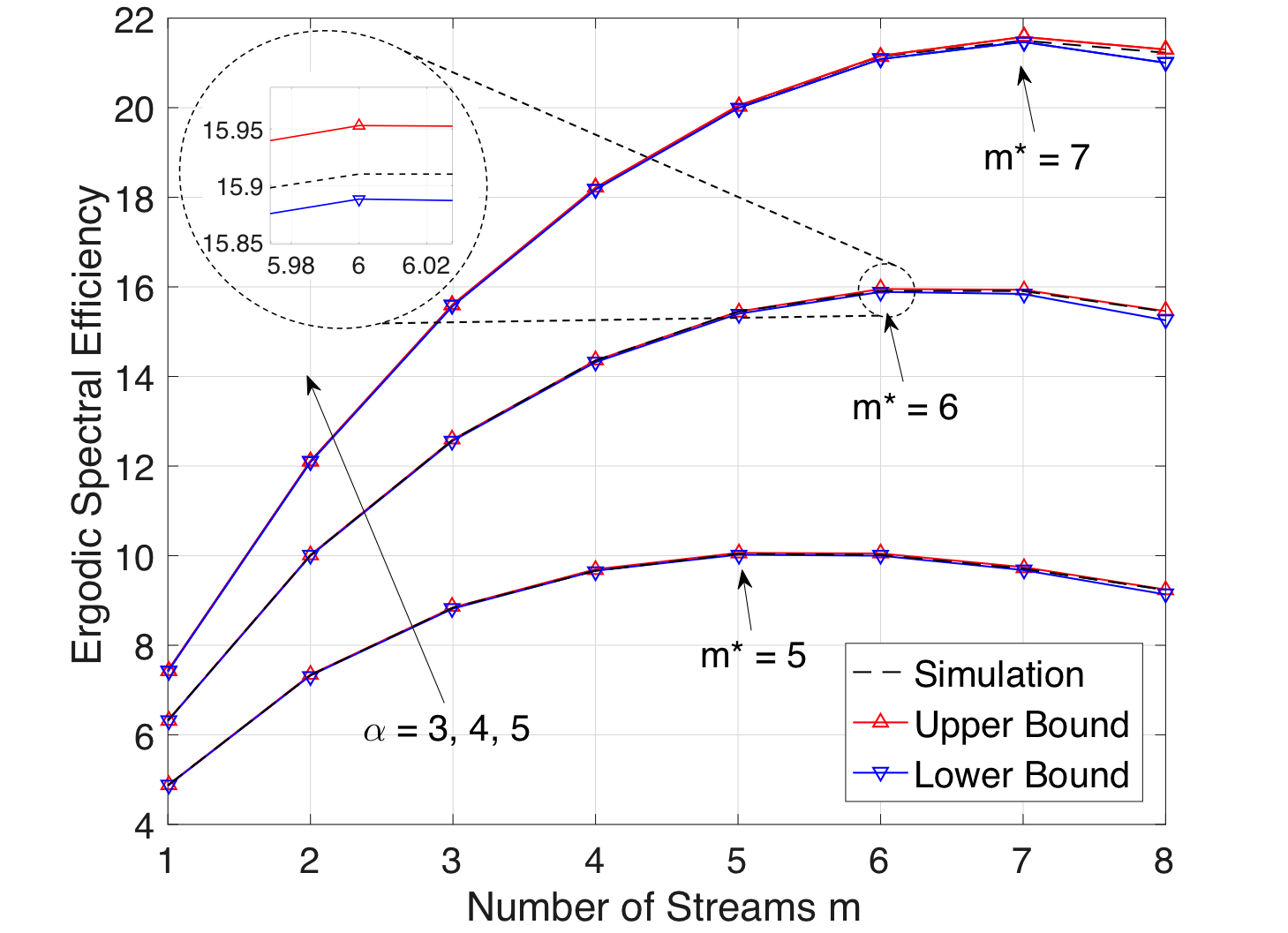}}}  
\end{array}$
\caption{
The ergodic SE of the analytical bounds and simulation results with (a) the different number of antennas $N$ and the $N$ number of streams $m = N$, and (b) the different number of streams $m$ and $N = 8$ antennas for a pathloss exponent $\alpha \in \{3, 4, 5\}$. Here, $m^{*}$ represents the optimal number of streams for simulation results.
} 
\label{fig:2}
\vspace{-1 em}
\end{figure}


Fig.~\ref{fig:2}(b) shows the analytical bounds and simulation results with $N = 8$ and the different number of streams $m$ for $\alpha \in \{3, 4, 5\}$.
As observed, the bounds closely match to the simulation results. 
The curves also confirms the intuition that there exists the optimal point of $m$, beyond which sending more streams degrades the ergodic SE.
Specifically, it is more efficient not to use an additional stream with the small eigenvalues due to the trade-off between multiplexing gain and SIR gain as explained from Theorem 1.
Moreover, the optimal $m$ is proportional to the pathloss exponent $\alpha$; 
the ICI term ${}_2 F_1\left(m,-\frac{2}{\alpha},1-\frac{2}{\alpha},-z\right)$ in the derived bounds decreases as the pathloss exponent $\alpha$ increases, resulting the increase of the SIR. 
Thereupon, as $\alpha$ becomes larger, the multiplexing gain becomes more desirable than the power gain in attempt to maximize the ergodic SE.
This leads to the increase of $m^*$.

{ICI cancellation using the remaining $N-m$ receive antennas \cite{veetil2013coverage,veetil2015performance} always decreases interference power, which results in the increase of the SIR.
Consequently, employing the interference cancellation would lightly lift the ergodic SE curves in Fig. \ref{fig:2}(b) except that ergodic SE would remain unchanged at $m = N$, and hence there still exists an optimal value for $m$. }


\section{Conclusion}

For a point-to-point SVD-MIMO downlink system, this paper derives upper and lower bounds on ergodic spectral efficiency by modeling a cellular network as a homogeneous Poisson point process and approximating the expectation of the channel eigenvalues. 
{ The upper bound on ergodic spectral efficiency applies to any MIMO strategy in a non-cooperative cellular network with equal power allocation across streams and no interference cancellation. }
We conjecture that there exists an optimal number of streams that maximizes the ergodic spectral efficiency, which is proportional to the pathloss exponent.
{ Incorporating interference cancellation in SVD-MIMO analysis would be desirable for future work.}

\begin{appendices}
  \section{Proof of Proposition 1}
Let $X$ be the eigenvalue of $\mathbf{H}/\sqrt{N}$~\cite{tse2005fundamentals} with the PDF of \eqref{eq:5}.
Since singular values of an IID sub-Gaussian square matrix can take a finite interval of values~\cite{wei2016upper} and the empirical distribution of $\sqrt{X}$ tends to quarter circular distribution~\cite{bordenave2012around}, we divide the PDF in \eqref{eq:5} into $N$ non-overlapping regions $\mathcal{A}_i$, $i = 1,\cdots, N$, where each region has a probability of $1/N$.
Considering that $X_i  = \sigma^2_i/N$ has the domain of $\mathcal{A}_i = (a_i,a_{i-1}]$ with $a_i = F_X^{-1}(1-i/N)$, $\mathbb{E}\big[X_i\big]$ can be approximated as
%
\vspace{-0.5 em}
\begin{equation}\label{eq:10}
\begin{split}
\mathbb{E}\big[X_{i}\big] &\simeq \mathbb{E}\big[X|X \in \mathcal{A}_i\big]= \int_{0}^4 x\,f_X(x|X \in \mathcal{A}_i) \,\mathrm{d}x.
\end{split}
\end{equation} 
The PDF of $X$ given $X \in \mathcal{A}_i$ is represented as
%
\small
\begin{equation}\label{eq:11}
\begin{split}
\lim_{\Delta x\to 0}f_X(x|X \in \mathcal{A}_i)\Delta x &=\lim_{\Delta x\to 0}\frac{\Pr(X\in [x,x+\Delta x], X\in \mathcal{A}_i)}{\Pr(X\in \mathcal{A}_i)}.
\end{split}
\end{equation}
\normalsize
By the definition of $\mathcal{A}_i$, $\Pr(X \in \mathcal{A}_i) = 1/N$ and
\vspace{-0.5em}
\begin{equation}\label{eq:12}
\Pr(X\in [x, x+\Delta x], X\in \mathcal{A}_i) = \int_x^{x+\Delta x} f^{(i)}_X(x)\mathrm{d}x
\end{equation}
where $f^{(i)}_X(x) = f_X(x)\text{ if } x \in \mathcal{A}_i$, and $f^{(i)}_X(x) =0$ otherwise.
Hence, \eqref{eq:11} becomes
\vspace{-0.5em}
\small
\begin{equation}\label{eq:14}
\begin{split}
f_X(x|X \in \mathcal{A}_i)&=\lim_{\Delta x\to 0}\frac{\int_x^{x+\Delta x} f^{(i)}_X(x)\mathrm{d}x}{\Delta x/N}= Nf^{(i)}_X(x).
\end{split}
\end{equation}
\normalsize
Finally, we put~\eqref{eq:14} into~\eqref{eq:10} with $\mathbb{E}[\sigma^2_i] = N\,\mathbb{E}[X_i]$.
%
\qed
\vspace{-0.5em}

\section{Proof of Theorem 1}

(\textit{Upper bound}) Replacing SIR${}_k$ in~\eqref{eq:4} with~\eqref{eq:3}, we have the ergodic SE expressed as
\vspace{-0.5em}

\small
\begin{equation}\label{eq:19}
\begin{split}
r(N, m, \alpha, \lambda) &= \mathbb{E}\Bigg[ \sum_{k=1}^{m}{\log_2\left(1+\frac{\|\mathbf{d}_{1}\|^{-\alpha}{\sigma}^{2}_{k}}{\sum_{i=2}^{\infty}{\|\mathbf{d}_{i}\|^{-\alpha}{q}_{i,k}}}\right)}\Bigg]\\
&\stackrel{(a)}{\leq}\mathbb{E}\Bigg[\sum_{k=1}^{m}\log_2\left(1+\frac{\|\mathbf{d}_1\|^{-\alpha}\mathbb{E}\left[\sigma^2_k\right]}{\sum_{i=2}^{\infty}{\|\mathbf{d}_{i}\|^{-\alpha}{q}_{i,k}}}\right)\Bigg]\\
&\stackrel{(b)}{=}\log_2{e} \sum_{k=1}^{m}\int_0^{\infty} \frac{1}{z}\left(1-\mathcal{M}_{s_u}(z)\right)\mathcal{M}_I(z)\mathrm{d}z
\end{split}
\end{equation}
\normalsize
with
\footnotesize
\begin{equation}\label{eq:lemma}
 \mathcal{M}_{s_u}(z) = \mathbb{E}\left[e^{-z\mathbb{E}\left[\sigma^2_k\right]}\right]\small\text{, } \small\mathcal{M}_I(z) = \mathbb{E}\left[e^{-z\|\mathbf{d}_1\|^\alpha \sum_{i=2}^\infty{\frac{{q}_{i,k}}{\|\mathbf{d}_{i}\|^{\alpha}}}}\right]
\end{equation}
\normalsize
where (a) is from Jensen's inequality and (b) comes from Lemma 1 in~\cite{hamdi2010useful}. Under Proposition 1, $\mathcal{M}_{s_u}(z)$ becomes
$\mathcal{M}_{s_u}(z)=\mathbb{E}\left[e^{-z\mathbb{E}\left[\sigma^2_k\right]}\right] \simeq e^{-zU_{N,k}}$,
where $U_{N,k}$ is defined in Theorem 1.
%
The Laplace transform of the ICI $\mathcal{M}_I(z)$ is derived by closely following the Appendix D in~\cite{park2015optimal} as
$\mathcal{M}_I(z)=1/{}_2F_1\left(m,-\frac{2}{\alpha},1-\frac{2}{\alpha}, -z\right)$.
%
This completes the proof for the upper bound.

(\textit{Lower bound}) 
With $\mathcal{M}_{s_l}(z) = \mathbb{E}\left[e^{-ze^{\mathbb{E}\left[\ln\sigma^2_k\right]}}\right]$ and $\mathcal{M}_I(z)$ in~\eqref{eq:lemma}, the ergodic SE~\eqref{eq:4} is lower bounded by 
\vspace{-0.5em}

\small
\begin{align*}\label{eq:23}
r(N, m, \alpha,\lambda) 
&\stackrel{(c)}{\geq}\mathbb{E}\Bigg [ \sum_{k=1}^{m}\log_2\left(1+\frac{\|\mathbf{d}_1\|^{-\alpha}e^{\mathbb{E}\left[\ln\sigma^2_k\right]}}{\sum_{i=2}^\infty{\|\mathbf{d}_{i}\|^{-\alpha}{q}_{i,k}}}\right)\Bigg ]\\
&\stackrel{(d)}{=}\log_2{e} \sum_{k=1}^{m}\int_0^{\infty} \frac{1}{z}\left(1-\mathcal{M}_{s_l}(z)\right)\mathcal{M}_I(z)\mathrm{d}z
\end{align*}
\normalsize
The inequality (c) is from Jensen's inequality and (d) is from Lemma 1 in~\cite{hamdi2010useful}. The expectation $\mathbb{E}\left[\ln\sigma^2_k\right]$ can be derived by following the similar steps in the proof of Proposition 1; 
let $Y_k=\ln(\sigma^2_k/N)$, then $\mathbb{E}[Y_k]$ can be approximated by using~\eqref{eq:7} and~\eqref{eq:8}, instead of~\eqref{eq:5} and~\eqref{eq:6} as 
$\mathbb{E}[Y_k] \simeq \frac{N}{\pi} \int_{b_k}^{b_{k-1}} y\,e^y\sqrt{\frac{1}{e^y}-\frac{1}{4}}\,\mathrm{d}y
$,
where $b_k = G_y^{-1}(1- k/N)$.
Since $\mathbb{E}[\ln\sigma^2_k] = \mathbb{E}[Y_k] +\ln N$,  $\mathcal{M}_{s_l}(z)$ becomes $\mathcal{M}_{s_l}(z) \simeq e^{-ze^{L_{N,k}}}$,
where $L_{N,k}$ is defined in Theorem 1. We omit $\lambda$ in the ergodic SE $r(\cdot)$ as the derived bounds are not a function of $\lambda$.
\qed
\end{appendices}

\bibliographystyle{IEEEtran}
\bibliography{WCL16.bib}

\begin{thebibliography}{10}
\providecommand{\url}[1]{#1}
\csname url@samestyle\endcsname
\providecommand{\newblock}{\relax}
\providecommand{\bibinfo}[2]{#2}
\providecommand{\BIBentrySTDinterwordspacing}{\spaceskip=0pt\relax}
\providecommand{\BIBentryALTinterwordstretchfactor}{4}
\providecommand{\BIBentryALTinterwordspacing}{\spaceskip=\fontdimen2\font plus
\BIBentryALTinterwordstretchfactor\fontdimen3\font minus
  \fontdimen4\font\relax}
\providecommand{\BIBforeignlanguage}[2]{{%
\expandafter\ifx\csname l@#1\endcsname\relax
\typeout{** WARNING: IEEEtran.bst: No hyphenation pattern has been}%
\typeout{** loaded for the language `#1'. Using the pattern for}%
\typeout{** the default language instead.}%
\else
\language=\csname l@#1\endcsname
\fi
#2}}
\providecommand{\BIBdecl}{\relax}
\BIBdecl

\bibitem{tse2005fundamentals}
D.~Tse and P.~Viswanath, \emph{{Fundamentals of wireless communication}}.\hskip
  1em plus 0.5em minus 0.4em\relax Cambridge university press, 2005.

\bibitem{paulraj2004overview}
A.~J. Paulraj, D.~A. Gore, R.~U. Nabar, and H.~B{\"o}lcskei, ``{An overview of
  MIMO communications-a key to gigabit wireless},'' \emph{Proc. of the IEEE},
  vol.~92, no.~2, pp. 198--218, Feb. 2004.

\bibitem{goldsmith2003capacity}
A.~Goldsmith, S.~A. Jafar, N.~Jindal, and S.~Vishwanath, ``{Capacity limits of
  MIMO channels},'' \emph{IEEE J. on Sel. Areas in Comm.}, vol.~21, no.~5, pp.
  684--702, Jun. 2003.

\bibitem{andrews2011tractable}
J.~G. Andrews, F.~Baccelli, and R.~K. Ganti, ``{A tractable approach to
  coverage and rate in cellular networks},'' \emph{IEEE Trans. on Comm.},
  vol.~59, no.~11, pp. 3122--3134, Nov. 2011.

\bibitem{di2013average}
M.~Di~Renzo, A.~Guidotti, and G.~E. Corazza, ``{Average rate of downlink
  heterogeneous cellular networks over generalized fading channels: A
  stochastic geometry approach},'' \emph{IEEE Trans. on Comm.}, vol.~61, no.~7,
  pp. 3050--3071, May. 2013.

\bibitem{veetil2013coverage}
S.~T. Veetil, K.~Kuchi, A.~K. Krishnaswamy, and R.~K. Ganti, ``{Coverage and
  rate in cellular networks with multi-user spatial multiplexing},'' in
  \emph{IEEE Int. Conf. on Comm.}, Jun. 2013, pp. 5855--5859.

\bibitem{veetil2015performance}
S.~T. Veetil, K.~Kuchi, and R.~K. Ganti, ``{Performance of PZF and MMSE
  receivers in cellular networks with multi-user spatial multiplexing},''
  \emph{IEEE Trans. on Wireless Comm.}, vol.~14, no.~9, pp. 4867--4878, 2015.

\bibitem{dhillon2013downlink}
H.~S. Dhillon, M.~Kountouris, and J.~G. Andrews, ``{Downlink MIMO HetNets:
  Modeling, ordering results and performance analysis},'' \emph{IEEE Trans. on
  Wireless Comm.}, vol.~12, no.~10, pp. 5208--5222, Oct. 2013.

\bibitem{hosseini2016stochastic}
K.~Hosseini, W.~Yu, and R.~S. Adve, ``{A stochastic analysis of network MIMO
  systems},'' \emph{IEEE Trans. on Signal Processing}, Aug. 2016.

\bibitem{lu2015stochastic}
W.~Lu and M.~Di~Renzo, ``{Stochastic geometry analysis of multi-user MIMO
  cellular networks using zero-forcing precoding},'' in \emph{IEEE Int. Conf.
  on Comm.}, Jun. 2015, pp. 1477--1482.

\bibitem{di2015stochastic}
M.~Di~Renzo and W.~Lu, ``{Stochastic geometry modeling and performance
  evaluation of MIMO cellular networks using the equivalent-in-distribution
  (EiD)-based approach},'' \emph{IEEE Trans. on Comm.}, vol.~63, no.~3, pp.
  977--996, Jan. 2015.

\bibitem{marvcenko1967distribution}
V.~A. Mar{\v{c}}enko and L.~A. Pastur, ``{Distribution of eigenvalues for some
  sets of random matrices},'' \emph{Mathematics of the USSR-Sbornik}, vol.~1,
  no.~4, p. 457, 1967.

\bibitem{chiu2013stochastic}
S.~N. Chiu, D.~Stoyan, W.~S. Kendall, and J.~Mecke, \emph{{Stochastic geometry
  and its applications}}.\hskip 1em plus 0.5em minus 0.4em\relax John Wiley \&
  Sons, 2013.

\bibitem{baccelli2009stochastic}
F.~Baccelli and B.~Blaszczyszyn, \emph{Stochastic geometry and wireless
  networks}.\hskip 1em plus 0.5em minus 0.4em\relax Now Publishers Inc, 2009,
  vol.~1.

\bibitem{wigner1965distribution}
E.~Wigner, ``{Distribution laws for the roots of a random Hermitian matrix},''
  \emph{Statistical Theories of Spectra: fluctuations}, pp. 446--461, 1965.

\bibitem{hamdi2010useful}
K.~A. Hamdi, ``{A useful lemma for capacity analysis of fading interference
  channels},'' \emph{IEEE Trans. on Comm.}, vol.~58, no.~2, Feb. 2010.

\bibitem{wei2016upper}
F.~Wei, ``{Upper bound for intermediate singular values of random matrices},''
  \emph{J. Math. Anal. Appl.}, Aug. 2016.

\bibitem{bordenave2012around}
C.~Bordenave and D.~Chafa{\"\i}, ``{Around the circular law},''
  \emph{Probability Surveys}, vol.~9, Jan. 2012.

\bibitem{park2015optimal}
J.~Park, N.~Lee, J.~G. Andrews, and R.~W. Heath~Jr, ``{On the optimal feedback
  rate in interference-limited multi-antenna cellular systems},'' \emph{IEEE
  Trans. on Wireless Comm.}, Aug. 2015.

\end{thebibliography}
\end{document}